\newcommand{\mc}{\mathcal}
\DeclareMathOperator\eye{\mathbb{I}}
\renewcommand{\thefootnote}{\fnsymbol{footnote}}
\begin{document}

\title{Non-Clifford Gates are Required for Long-Term  Memory}
\author[1]{Jon Nelson$^{*\dagger}$}
\author[1,2]{Joel Rajakumar$^{*\ddagger}$}
\author[1]{Michael J. Gullans}
\affil[1]{\normalsize  Joint Center for Quantum Information \& Computer Science, University of Maryland and NIST}
\affil{\normalsize Department of Computer Science,
	University of Maryland}
\affil[2]{\normalsize  IBM T. J. Watson Research Center, Yorktown Heights, NY}
\date{}
\maketitle
\renewcommand{\thefootnote}{\fnsymbol{footnote}}
\footnotetext[1]{These authors contributed equally to this work.}
\renewcommand{\thefootnote}{\arabic{footnote}}
\renewcommand{\thefootnote}{\fnsymbol{footnote}}
\footnotetext[2]{nelson1@umd.edu}
\renewcommand{\thefootnote}{\arabic{footnote}}
\renewcommand{\thefootnote}{\fnsymbol{footnote}}
\footnotetext[3]{jrajakum@umd.edu}
\renewcommand{\thefootnote}{\arabic{footnote}}
\begin{abstract}
We show that all Clifford circuits under interspersed depolarizing noise lose memory of their input exponentially quickly, even when given access to a constant supply of fresh qubits in arbitrary states. This is somewhat surprising given the result of \cite{aharonov_fault-tolerant_1997} which gives a fault-tolerant protocol for general quantum circuits using a supply of fresh qubits. Our result shows that such a protocol is impossible using only Clifford gates demonstrating that non-Clifford gates are fundamentally required to store information for long periods of time.
\end{abstract}
\section{Introduction}
It has been known for some time that noisy quantum circuits can demonstrate fault-tolerance when given access to a supply of fresh qubits (or equivalently reset gates) \cite{aharonov_fault-tolerant_1997}. On the other hand, without this resource, it has been shown that \text{all} noisy circuits lose memory of their input exponentially quickly due to the build up of entropy \cite{M_ller_Hermes_2016,aharonov1996limitationsnoisyreversiblecomputation}. Here, we show that the non-Clifford gate, i.e. the resource of `magic', holds a similar power in terms of fault-tolerance. In particular, removing the non-Clifford gates from a noisy quantum (or classical) computation also causes it to lose memory of its input exponentially quickly, even when entropy can be expelled through reset gates.

We consider an identical setup to the threshold theorem proved in \cite{aharonov_fault-tolerant_1997}, where intermediate measurements are not allowed during the circuit\footnote{Note that intermediate measurements can be coherently simulated using the deferred measurement principle, but any intermediate classical processing on the measurement outcome must also be noisy and Clifford.}, however, the quantum circuit is allowed to reset any qubit, at any time, to any state. This setting is practically relevant as it captures the limitations of near-term quantum devices, where intermediate measurements remain challenging \cite{bluvstein2023logical}. These restrictions also address the question of whether noisy quantum circuits can protect quantum information on their own without the intervention of a noiseless external observer. This question has similarly motivated the field of self-correcting quantum memory \cite{Dennis_2002,alicki2008thermalstabilitytopologicalqubit}, and is a primary motivation in several fault-tolerance results \cite{aharonov1996limitationsnoisyreversiblecomputation,aharonov_fault-tolerant_1997,anshu_liu_nguyen_pattison_loglogn_iqp}. Furthermore, this setting allows us to prove that any Clifford protocol loses memory of $\textit{all}$ input states. Notice that this is trivially impossible when given intermediate measurements since one could simply measure any classical input state and store this information indefinitely in noiseless classical memory. Thus, we explicitly disallow noiseless classical resources.
 
Several recent results have demonstrated a similar loss of memory in noisy \textit{random} quantum circuits that contain resets and other non-unital operations \cite{mele2024noiseinducedshallowcircuitsabsence,angrisani2025simulatingquantumcircuitsarbitrary,martinez_simulation, schuster2024polynomialtimeclassicalalgorithmnoisy}. However, these results differ from ours in a primary aspect, even apart from the assumption of randomness: they show that the input state has negligible effect on the expected value of any \textit{fixed observable} after high enough depth, whereas we show that the input state has negligible effect on all possible observables at once. This is because we prove a convergence in \textit{trace distance}, which, to our knowledge, has never been shown before in a setting that includes non-unital operations and noise of any constant rate. Thus, we anticipate that our analysis techniques may be of broad interest. We state our main theorem below,
\begin{theorem} \label{theorem:main}
    Let $\Phi$ denote the channel produced by a circuit of $d$ layers of Clifford gates and reset gates on $n$ qubits, with single-qubit depolarizing noise of strength $\gamma$ on each qubit after each layer. Let $\rho$ and $\sigma$ be two arbitrary input states. When $d > d^*$ where \mbox{$d^* = O(\gamma^{-1}\log(n) \log (n/\epsilon))$},
    \begin{align}
        \|\Phi(\rho) - \Phi(\sigma)\|_1 \leq \epsilon
    \end{align}
\end{theorem}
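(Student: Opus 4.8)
The plan is to work in the Heisenberg picture and exploit the fact that Clifford unitaries, reset channels, and single‑qubit depolarizing channels all act especially simply on Pauli operators. First I would record the key observation: for any Pauli $M$, the adjoint channel sends $M$ to a scalar times a single Pauli, $\Phi^\dagger(M)=\lambda_M\,Q_M$. Indeed, Clifford conjugation permutes Paulis up to sign, the adjoint of the noise layer multiplies a Pauli by $(1-\gamma)$ raised to its weight, and the adjoint of a reset of qubit $i$ sends a Pauli to a scalar (of modulus $\le 1$, built from the reset state's Bloch components) times the same Pauli with identity on qubit $i$. Tracking the propagated Paulis $M=M^{(d)},M^{(d-1)},\dots,M^{(0)}=Q_M$ through the $d$ layers gives $|\lambda_M|\le\prod_{j=1}^{d}(1-\gamma)^{|M^{(j)}|}$. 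Since resets, Cliffords, and noise all fix the identity, once $Q_M\neq I$ every $M^{(j)}\neq I$, hence $|\lambda_M|\le(1-\gamma)^{d}$; and because $\lambda_I=1$ always, writing $\Phi(\rho)-\Phi(\sigma)=\frac{1}{2^n}\sum_{M}\lambda_M\,\tr\!\big(Q_M(\rho-\sigma)\big)M$ shows the $M=I$ term and every $M$ with $Q_M=I$ drop out, so the whole difference is carried by the Paulis whose backward trajectory stays away from the identity.

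The main subtlety is converting this Pauli‑level decay into a trace‑distance bound. The naive route, $\|\cdot\|_1\le 2^{n/2}\|\cdot\|_2$ followed by summing $|\lambda_M|^2$, is far too lossy once resets are present: a reset followed by noise leaves the reset qubit in a slightly mixed state, and the resulting output operators $\Phi(M)$ can be essentially product states whose Pauli $2$‑norm is exponentially larger than their trace norm. So instead I would track the trace norm directly: since $\Phi(\rho)-\Phi(\sigma)=\frac{1}{2^n}\sum_{M\neq I}\tr(M(\rho-\sigma))\,\Phi(M)$ and $|\tr(M(\rho-\sigma))|\le 2$, it suffices to bound $\frac{1}{2^n}\sum_{M\neq I}\|\Phi(M)\|_1$. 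For this I would use that the map from a Pauli's $\mathbb{F}_2$‑label to that of its forward image is linear (a composition of symplectic maps and coordinate projections coming from resets), so the Paulis contributing to a fixed $\Phi(M)$ form a coset of a fixed subgroup, and one can show $\|\Phi(M)\|_1$ equals (the noise suppression factor $\prod_j(1-\gamma)^{|M^{(j)}|}$) times $2^{n}$. In the unital case (no resets) this already yields, via Hölder over the $d$ layers, $\frac{1}{2^n}\sum_{M\neq I}\|\Phi(M)\|_1\le(1+3(1-\gamma)^d)^n-1=O\!\big(n(1-\gamma)^d\big)$, which is below $\epsilon$ once $d=\Omega(\gamma^{-1}\log(n/\epsilon))$.

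To handle general (non‑unital) circuits and get the stated bound, I would split the $d$ layers into $B=d/\ell$ blocks of length $\ell=\Theta(\gamma^{-1}\log n)$ and show each block contracts the trace distance of the residual traceless input by a constant factor. Within a block, Pauli trajectories that ever reach weight $\gtrsim\log n$ are suppressed by $n^{-\Omega(1)}$ from the noise, while trajectories that stay at low weight throughout are few — there are only $\binom{n}{w}3^w$ Paulis of weight $w$, which the noise weighting keeps polynomially controlled — and, after using the anticommutation (generalized Bloch vector) constraint that the coefficients $\tr(Q(\rho-\sigma))$ over an anticommuting family have bounded $\ell_2$ norm, they contribute only $O(\epsilon)$. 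Iterating over $B=\Theta(\log(n/\epsilon))$ blocks gives $\|\Phi(\rho)-\Phi(\sigma)\|_1\le\epsilon$ for $d>d^*=\ell B=O(\gamma^{-1}\log n\,\log(n/\epsilon))$.

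The step I expect to be the main obstacle is precisely the passage from Pauli‑coefficient decay to trace‑distance decay in the presence of resets: one must argue that the part of $\Phi(\rho)$ that genuinely depends on the input has small trace norm, cleanly separated from the input‑independent "garbage" that can carry exponentially large Pauli weight. Controlling this — rather than any single‑layer or counting estimate — is where non‑unitality bites, and is the reason a naive entropy or $2$‑norm argument, which suffices in the unital case, does not go through here.
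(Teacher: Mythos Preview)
You correctly set up the Heisenberg picture, correctly observe that the adjoint of each layer sends a Pauli to a scalar times a single Pauli with $|\lambda_M|\le\prod_j(1-\gamma)^{|M^{(j)}|}$, and correctly identify the main obstacle as converting this Pauli-level decay into a trace-distance bound in the presence of resets. But your proposed fixes do not close that gap. The assertion that $\|\Phi(M)\|_1$ equals a noise-suppression factor times $2^n$ is not well-posed in the non-unital case: for the \emph{forward} map there is no single backward trajectory attached to an input Pauli $M$; rather $\Phi(M)$ is a sum over the whole coset $\{M':Q_{M'}=\pm M\}$ with generically distinct coefficients $\lambda_{M'}$, and its trace norm has no simple closed form. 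Your H\"older argument works in the unital case only because Clifford layers permute Paulis bijectively; once the adjoint reset is many-to-one the same bound picks up a factor exponential in the total number of resets and becomes vacuous. The block-contraction claim (``each block contracts the trace distance by a constant factor'') is essentially the theorem restricted to one block and is not easier to prove, and the anticommutation constraint does not obviously apply since the low-weight $Q_M$'s arising in a block need not form an anticommuting family.

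The paper's key idea, which you are missing, is to \emph{stochastically unravel} the noise: write $\mathcal{N}_\gamma = (1-\gamma)\,\mathrm{id} + \gamma\,\mathcal{D}$ so that $\Phi = \mathbf{E}_b[\Phi_b]$, where $b$ records whether each noise location fired a complete depolarization. For every fixed $b$ the map $\Phi_b$ is still a channel, so $\|\Phi_b(\rho)-\Phi_b(\sigma)\|_1\le 2$ unconditionally, and hence
\[
\|\Phi(\rho)-\Phi(\sigma)\|_1 \;\le\; 2\,\Pr_b\!\bigl(\Phi_b(\rho)\neq\Phi_b(\sigma)\bigr)\;\le\; 2\,\Pr_b\!\bigl(\exists\,s\neq\eye:\ \Phi_b^\dagger(s)\not\propto\eye\bigr).
\]
This single step replaces the entire Pauli-to-trace-distance conversion and reduces the problem to a purely combinatorial one: bound the probability that any non-identity Pauli survives the random adjoint circuit $\Phi_b^\dagger$. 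Since $\Phi_b^\dagger$ is many-to-one on Paulis for each fixed $b$, the survivor set can only shrink, and the batched argument you sketch (blocks of length $\Theta(\gamma^{-1}\log(n/\epsilon))$, quartering a weight threshold $w_j$ each round) now goes through with nothing more than a union bound and the count $\le d'(9n)^{w}$ of low-weight intermediate Paulis---no H\"older, no $\ell_2$ or anticommutation estimates, and no per-block trace-norm contraction coefficient.
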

Operationally, our result implies that any two input states become indistinguishable after polylogarithmically many layers of the circuit in the sense that any distinguisher correctly guesses which state it possesses with probability at most $1/2 +o(1/\poly(n))$.  

Besides the assumption that the circuit contains only Clifford gates and qubit resets, our result makes no additional assumptions on the circuit. In other words, the statement holds for \textit{any} Clifford circuit using mid-circuit resets in \textit{any} desired locations. In fact, the gates are not required to be geometrically local or even $O(1)$-local. The only requirement is that after each layer of resets and gates a round of constant-strength single-qubit depolarizing channels is applied to each qubit. Finally, the convergence in trace distance occurs for \textit{any} two quantum states. Despite the generality of the result, our proof is relatively short, relying primarily on a counting argument.

\subsection{Implications}
Our result is somewhat surprising since many error-correction gadgets can be constructed using Clifford gates. For example, one of the most popular types of error correcting codes is a stabilizer code which uses Clifford circuits for both its fault-tolerant encoding and fault-tolerant syndrome extraction protocols \cite{gottesman2013fault}. Clifford circuits are also desirable since they propagate Pauli errors nicely making it easier to track errors throughout the circuit, a property which has led to several fault-tolerant implementations of noisy logical Clifford circuits \cite{delfosse2023spacetime,bravyi_gosset_koenig_tomamichael_noisy, bacon_flammia_harrow_shi_subsystem_code, bergamaschi_liu}. 
Although these properties seem useful for correcting errors, our results point out that they are not so powerful on their own, and one must introduce a non-Clifford gate at some point of the computation in order to achieve long-term memory. For example, in order to make intermediate syndrome measurements robust to errors it is often required to perform a majority vote, which cannot be done using Clifford gates. In this way, our result highlights the necessary ingredients for effective fault-tolerant protocols.

Another reason our result is surprising is that magic is often discussed as a computational resource rather than as an error-correction resource. This is because Clifford circuits can be classically simulated efficiently \cite{gottesman1998heisenbergrepresentationquantumcomputers} even when augmented with a small number of non-Clifford gates \cite{Aaronson_2004,Bravyi_2019}. Thus, a high degree of non-Cliffordness is essential for achieving quantum advantage. In this work, we do not consider the problem of computational hardness but instead focus on the ability to preserve quantum states in memory. This is quite a different task and on the surface does not seem to require departure from Cliffordness as a necessary ingredient. However, our result shows that non-Clifford gates are crucial for error correction in addition to their importance in quantum advantage. 

A final feature of our result is that the intermediate reset gates are allowed to produce fresh magic states. One might think that this would allow for the implementation of non-Clifford gates through magic state injection. However, this technique requires a classically controlled Clifford operation as the final step, which is itself a non-Clifford gate. Thus, it is not clear how to use fresh magic states to overcome the limitations of the Clifford circuit. In fact, our no-go result shows that Clifford gates alone cannot exploit the resource of magic states for fault-tolerance, highlighting a distinction between the utility of magic states and non-Clifford gates. This contrasts with the case of computational hardness, where the inclusion of magic states is sufficient to promote Clifford circuits to a model that is hard to sample from classically \cite{bouland,Yoganathan_2019}.

\subsection{Proof Strategy}
Many related noisy circuit problems can be formulated in the Pauli path framework. Broadly speaking, a Pauli path is a trajectory in the Feynman path integral of the circuit with respect to the Pauli basis. In our case, to prove the loss of memory it is sufficient to show that the total contribution of each Pauli path connecting the input to the output state is negligible. To do this, it will be crucial to leverage the property that depolarizing noise causes the contribution of a Pauli path to decay exponentially in the number of non-identities in the path, also called the `weight'. This property is the key to many prior works that have also studied noise dynamics in quantum circuits \cite{Aharonov_2023,Bremner_2017,schuster2024polynomialtimeclassicalalgorithmnoisy,gao2018efficientclassicalsimulationnoisy}.

However, reset gates present a challenge for directly adapting these techniques as they introduce new Pauli paths throughout the circuit. This directly thwarts the Pauli path decay approach, which attempts to suppress the contribution of these paths. 
To apply Pauli path techniques to circuits with non-unital channels, we use a similar technique to \cite{mele2024noiseinducedshallowcircuitsabsence,schuster2024polynomialtimeclassicalalgorithmnoisy,angrisani2025simulatingquantumcircuitsarbitrary,martinez_simulation}, which considers the adjoint map of the noisy circuit. The reason for this is that the adjoint map has the nice property that it is once again unital, and thus the Pauli path decay techniques can be employed on the adjoint map. Unitalness is important for these decay techniques since by definition it means that the map has identity as a fixed point and the presence of depolarizing noise causes these paths to decay until the fixed point is reached. 

Even after reformulating our problem as a Pauli path decay problem, the difficulty of the reset gate still presents itself in other subtle ways that require careful analysis. In particular, the adjoint map of the $\ket{0}$-reset gate maps $Z$ to $\eye$, which can lower the Pauli weight. This makes the Pauli path less susceptible to noise decay. To address this, we first treat the depolarization noise as stochastically applying a complete depolarizing error or leaving the qubit unchanged, where a depolarization error has the effect of mapping non-identity Pauli operators to zero. We then reduce our problem to bounding the probability that any non-identity Pauli operator remains at the end of the adjoint circuit. This stochasticization technique was also used for Clifford circuits in \cite{nelson2024polynomialtimeclassicalsimulationnoisy}, but here, we apply it to the adjoint map of the circuit instead. Note that the action of the adjoint reset can now actually help decay the total number of remaining non-identity Pauli operators since it can map two Pauli operators to one (e.g. it maps both $\eye$ and $Z$ to $\eye$). This property that the number of remaining Pauli operators can only decrease throughout the adjoint circuit is crucial to our analysis. 

In our proof, we consider the adjoint circuit in batches of $d'$ layers at a time. Our proof proceeds inductively by alternating two steps. First, we show that all Pauli paths above a given weight $w_0$ are mapped to zero by a batch of $d'$ layers with high probability. Then we bound the number of remaining Pauli operators by using the property that each input Pauli operator to the batch of layers is mapped to at most one Pauli operator at the output. We then iterate these steps setting $w_1 = w_0/4$ and so on. It turns out that in order to achieve a sufficient decay, it is required that $d' = O(\log n)$ assuming a constant noise strength. Since we must repeat for $O(\log n)$ rounds to decay all possible Pauli weights, we get that no Pauli operators remain at the end of the adjoint circuit after $d = O(\log^2 n)$ total layers.

\subsection{Open Questions}
Prior work has shown that when circuits have no mechanism for expelling entropy, this entropy accumulates and eventually consumes the circuit \cite{aharonov1996limitationsnoisyreversiblecomputation,M_ller_Hermes_2016}. In our work, we analyze the effects of noise in the setting where the circuit is able to pump entropy out of the system by utilizing reset gates. It would be interesting to consider other mechanisms in which the circuit could reduce its entropy. The closest next step would be to generalize our results to show that worst-case Clifford circuits with amplitude damping noise must have short-term memory. Similar to the case considered in this paper, there are known fault-tolerant protocols using non-Clifford gates to preserve quantum information under amplitude damping noise \cite{shtanko2024complexitylocalquantumcircuits,benor2013quantumrefrigerator}. One reason why this could be a tractable problem to solve is that amplitude damping noise acts in a similar manner to a `random reset' error, albeit with some subtle differences. 
On the other hand, amplitude damping noise also has a unique feature in that it can act as a coherent, but noisy, rotation on stabilizer states \cite{trigueros2025nonstabilizernesserrorresiliencenoisy}; however, it is unclear whether this mechanism can be used as a mid-circuit non-Clifford gate. 

Another promising direction is to show that noisy monitored random circuits also have short-term memory. It is known that in the noiseless case, these circuits exhibit a measurement-induced phase transition \cite{PhysRevB.100.134306,PhysRevX.9.031009,PhysRevB.98.205136} where above a critical measurement rate, the output of the circuit converges to a single pure state regardless of the input state. On the other hand, below this measurement rate, the circuit preserves an extensive logical codespace for exponentially long times \cite{Gullans_2020}. When noise is introduced, it has been argued using a heuristic mapping of the circuit to a classical Ising model that this phase transition disappears and the circuit loses memory for any constant noise strength and measurement rate ~\cite{BAO2021168618, Noise_bulk, jian2021quantum, PhysRevB.108.104310, PhysRevB.107.014307, PhysRevB.108.104310, PhysRevB.107.L201113,PhysRevLett.132.240402}. It would be interesting to determine whether our techniques can be used to give a rigorous proof of this statement for the case of random Clifford circuits. Notice that the assumption of randomness is now required since there are known fault-tolerant quantum memory protocols using Clifford gates and intermediate measurements.

Additionally, it would be desirable to lift any of these statements to the case of Haar random circuits. In particular, showing that Haar random circuits under amplitude damping noise have short-term memory would strengthen the result of \cite{mele2024noiseinducedshallowcircuitsabsence} to the case of trace distance rather than observable estimation which would be a significant improvement. We also point out concurrent work \cite{SuunSoumik}, which gives further evidence for this conjecture. Our hope is that the equivalence of second-moment quantities between Clifford circuits and Haar-random circuits may provide one avenue to make progress. 

Finally, it is an interesting future direction to consider whether our no-go results on fault-tolerant quantum memory can be lifted to no-go results on computational hardness in certain noise regimes. For example, can we use these techniques to classically simulate Clifford-magic circuits with non-unital noise, similar to \cite{nelson2024polynomialtimeclassicalsimulationnoisy}? If we can also extend our results to Haar-random circuits as suggested earlier, we may be able to make progress towards classically simulating noisy Haar-random circuits with non-unital noise, which currently presents a significant gap in the literature \cite{ghosh_non-unital}.

\section{Background and Notation}

\subsection{Pauli and Clifford Group}

The Pauli group is defined as:
\begin{align*}
        \mathsf{P}_n := \{1,i,-1,-i\} \times \{\eye,X,Y,Z\}^{\otimes n}
    \end{align*}
We define the phaseless Pauli group $\hat{\mathsf{P}}_n \leqslant \mathsf{P}_n$ as:
    \begin{align*}
        \hat{\mathsf{P}}_n := \{1\} \times \{\eye,X,Y,Z\}^{\otimes n}
    \end{align*}   
    For any $O$ that is proportional to a Pauli operator $P \in \mathsf{P}_n $ let $|O|$ denote the weight of the Pauli operator $P$, i.e. the number of non-identity qubits.
    
    The Clifford group is defined as:
    \begin{align*}
        \mathsf{C}_n := \{ U \in \mathsf{U}(2^n): UPU^\dagger \in \mathsf{P}_n \forall P \in \mathsf{P}_n\}
    \end{align*}
    where $\mathsf{U}(2^n)$ is the group of $2^n \times 2^n$ unitary matrices.

\subsection{Stochastic Noise}
We define the depolarizing noise channel with noise strength $\gamma$ as:
\begin{align}
    \mathcal N_{\gamma}(\rho)  = (1-\gamma)\rho + \gamma  \frac{\eye}{2} \Tr\rho
\end{align}
This channel can be viewed as stochastically applying a complete depolarization error with probability $\gamma$, which traces out the given qubit and replaces it with the maximally mixed state. If no error occurs then the channel does nothing. Denoting this depolarization error as $\mc D$, we have that $\mathcal N_{\gamma} := (1-\gamma)\eye + \gamma  \mc D$. For a given noisy circuit $\Phi$ with mid-circuit depolarizing channels, let $b$ be a bitstring that contains a bit for each depolarizing channel representing
whether or not $\mathcal D$ is applied. We then let $\Phi_b$ represent the circuit where each noise channel is replaced with either $\eye$ or $\mc D$ as specified by $b$. Sampling $b$ according to a binomial distribution parameterized by $\gamma$, we have that by definition
\begin{align}
    \Phi = \mathbf{E}_{b}\Phi_b
\end{align}
\subsection{Reset Channel}

We define the qubit reset channel to an arbitrary state $\rho$ as follows:
\begin{align}
    \mc R_{\rho}(\sigma) = \rho \Tr\sigma
\end{align}

\subsection{Adjoint Maps}

In our analysis, it will be convenient to use the adjoint of the noisy circuit. For a given channel $\mc N$, the adjoint map $\mc N^\dagger$ is defined as follows \cite{watrous,markwilde}.
\begin{align}
    \langle X, \mathcal N(Y) \rangle = \langle \mathcal N^\dagger(X), Y\rangle
\end{align}
where $\langle X, Y \rangle := \Tr X^\dagger Y$ and $X$ and $Y$ are linear operators on the channel's Hilbert space. From this definition, it can be shown that the adjoint of a given channel $\mathcal{N}(\rho) = \sum_i K_i \rho K_i^\dagger $ is represented by $\mathcal{N}^\dagger(\rho) = \sum_i K_i^\dagger \rho K_i$. Furthermore, it is also true that given the Pauli transfer matrix of a channel, $T(\mc N)$ we have that $T(\mc N^\dagger) = T(\mc N)^\top$.

It is next helpful to explicitly characterize the adjoint map for completely depolarizing errors and qubit resets. First, a depolarizing error $\mc D$ maps the Pauli matrices as follows
\begin{align}
\begin{split}
    \mc D(I) &= I\\
    \mc D(X) &= 0\\
    \mc D(Y) &= 0\\
    \mc D(Z) &= 0
    \end{split}
\end{align}
Taking the transpose of the associated Pauli transfer matrix we have that $\mc D^\dagger = \mc D$.

Next the qubit reset channel maps the Pauli matrices as:
\begin{align}
\label{eq:resetptm}
    \begin{split}
    \mc R_\rho(I) &= 2\rho\\
    \mc R_\rho(X) &= 0\\
    \mc R_\rho(Y) &= 0\\
    \mc R_\rho(Z) &= 0
    \end{split}
\end{align}
For the adjoint of the reset channel, we can similarly take the transpose of the Pauli matrix associated with \Cref{eq:resetptm}. Suppose $\rho = \frac{I + \alpha X + \beta Y + \gamma Z}{2}$, then:
\begin{align}
    \mc R_\rho^\dagger(I) &= I \\
    \mc R_\rho^\dagger(X) &= \alpha I \\
    \mc R_\rho^\dagger(Y) &= \beta I\\
    \mc R_\rho^\dagger(Z) &= \gamma I
\end{align}
Finally, note that the adjoint of each Clifford gate is also a Clifford gate.

\section{Analysis}
\label{sec:analysis}
The primary property that we will use in our analysis is that each of the adjoint maps that make up $\Phi_b^\dagger$ is many-to-one on the set of Pauli operators since each input Pauli $s_0$ is mapped to at most one output Pauli $s_1$, i.e. $\Phi_b^\dagger(s_0) \propto s_1 \text{ or } 0$. In fact, our results apply generally to all noisy circuits whose adjoint maps have this many-to-one property.

\subsection{Reduction to Pauli Path Survival Probability}
We begin by analyzing our desired quantity in the Pauli basis, which allows us to reduce our problem to bounding the probability that a non-identity Pauli operator is output by the adjoint of the circuit. 
This is captured in the following lemma,
\begin{lemma} \label{lemma:survival_probability}
    For $\Phi,\rho,\sigma$ as in \Cref{theorem:main},
    \begin{align}
    \|\Phi(\rho) - \Phi(\sigma)\|_1 \leq 2 \Pr(\exists s \in \hat{\mathsf{P}}_n/\eye: \Phi^\dagger_b(s) \in \hat{\mathsf{P}}_n/\eye)
\end{align}
\end{lemma}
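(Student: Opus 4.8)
\emph{Proof plan.} I would work in the Pauli basis and exploit the randomized decomposition $\Phi = \mathbf{E}_b \Phi_b$, where $b$ ranges over the binomial ensemble of noise realizations. Let $E$ denote the event appearing on the right-hand side, i.e. the set of realizations $b$ for which some non-identity phaseless Pauli $s$ has $\Phi_b^\dagger(s)$ equal to a nonzero scalar multiple of an element of $\hat{\mathsf{P}}_n/\eye$. (I read membership in $\hat{\mathsf{P}}_n/\eye$ up to proportionality: an adjoint reset sending $Z \mapsto \gamma \eye$ keeps $b$ out of $E$, while one sending $Z_j Z_k \mapsto \gamma Z_k$ places $b$ in $E$; this is the natural reading, since otherwise reset and Clifford scalars would make the bound undercount.) The proof then reduces to two facts: (i) if $b \notin E$, then $\Phi_b(\rho) = \Phi_b(\sigma)$; and (ii) $\|\Phi_b(\rho) - \Phi_b(\sigma)\|_1 \le 2$ for every $b$.

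For (i), I would expand the image of an arbitrary operator $\tau$ under $\Phi_b$ in the Pauli basis and move the Paulis across the adjoint,
\begin{align}
\Phi_b(\tau) \;=\; \frac{1}{2^n}\sum_{s \in \hat{\mathsf{P}}_n} \Tr[s\,\Phi_b(\tau)]\, s \;=\; \frac{1}{2^n}\sum_{s \in \hat{\mathsf{P}}_n} \Tr[\Phi_b^\dagger(s)\,\tau]\, s .
\end{align}
By the many-to-one property recalled just before the lemma, each $\Phi_b^\dagger(s)$ is proportional to a single phaseless Pauli or is zero; and since every $\Phi_b$ is trace preserving, $\Phi_b^\dagger$ is unital, so the $s = \eye$ term equals $2^{-n}(\Tr \tau)\,\eye$. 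If $b \notin E$, then for every $s \ne \eye$ we have $\Phi_b^\dagger(s) = \lambda_{b,s}\,\eye$ for a scalar $\lambda_{b,s}$ (possibly $0$), so $\Tr[\Phi_b^\dagger(s)\,\tau] = \lambda_{b,s}\Tr\tau$ and $\Phi_b(\tau)$ depends on $\tau$ only through $\Tr\tau$. Since $\Tr\rho = \Tr\sigma = 1$, this gives $\Phi_b(\rho) = \Phi_b(\sigma)$.

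Fact (ii) is routine: each $\Phi_b$ is a composition of Clifford, reset, complete-depolarization, and identity channels, hence is CPTP, so $\Phi_b(\rho)$ and $\Phi_b(\sigma)$ are density operators and $\|\Phi_b(\rho) - \Phi_b(\sigma)\|_1 \le \|\Phi_b(\rho)\|_1 + \|\Phi_b(\sigma)\|_1 = 2$. Combining the two facts, the summand $\Phi_b(\rho) - \Phi_b(\sigma)$ vanishes off $E$, so by convexity of the trace norm,
\begin{align}
\|\Phi(\rho) - \Phi(\sigma)\|_1 \;=\; \bigl\| \mathbf{E}_b[\mathbf{1}[b \in E]\,(\Phi_b(\rho) - \Phi_b(\sigma))] \bigr\|_1 \;\le\; \mathbf{E}_b\bigl[\mathbf{1}[b \in E]\,\|\Phi_b(\rho) - \Phi_b(\sigma)\|_1\bigr] \;\le\; 2\,\Pr[b \in E] ,
\end{align}
which is exactly the claimed inequality. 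Note this step uses nothing about $\gamma$ or $d$; all the quantitative work is pushed into the subsequent task of bounding $\Pr[b \in E]$.

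The step I expect to need the most care is the implication ``$b \notin E \Rightarrow \Phi_b^\dagger(s) \propto \eye$ or $\Phi_b^\dagger(s) = 0$ for every non-identity $s$''. This is where I must combine the many-to-one structure of the adjoint layers — which guarantees an image is a single Pauli rather than a linear combination — with a careful definition of $E$ that is insensitive to overall scalars and signs; getting that definition right (as flagged in the first paragraph) is what makes the rest go through cleanly. Everything else — the Pauli expansion, unitality of $\Phi_b^\dagger$, the trace-norm triangle inequality, and trace preservation — is standard.
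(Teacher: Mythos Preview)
Your proof is correct and follows essentially the same approach as the paper: split $\Phi = \mathbf{E}_b \Phi_b$, use convexity of the trace norm, observe that $\Phi_b$ is CPTP so the difference has norm at most $2$, and argue via the Pauli expansion that $\Phi_b(\rho) = \Phi_b(\sigma)$ whenever every non-identity $s$ has $\Phi_b^\dagger(s)$ proportional to $\eye$ or zero. Your explicit flagging of the ``up to proportionality'' reading of $\Phi_b^\dagger(s) \in \hat{\mathsf{P}}_n/\eye$ is exactly the right interpretation (the paper uses it implicitly via the many-to-one property stated at the start of \Cref{sec:analysis}), and your Pauli-basis expansion makes the implication $b \notin E \Rightarrow \Phi_b(\rho) = \Phi_b(\sigma)$ slightly more explicit than the paper's version, but the underlying argument is the same.
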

\begin{proof}
For any error configuration $b$, note the following fact. 
\begin{align}
    &\Phi_b(\rho) = \Phi_b(\sigma) \\
    &\iff \forall s \in \hat{\mathsf{P}}_n: \Tr(\Phi_b(\rho)s) = \Tr(\Phi_b(\sigma)s)\\
    &\iff \forall s \in \hat{\mathsf{P}}_n: \Tr(\rho \Phi_b^\dagger (s)) = \Tr(\sigma \Phi_b^\dagger (s))
\end{align}
The above property follows from the fact that all density matrices have a unique decomposition in the Pauli basis. Now,
\begin{align}
     \|\Phi(\rho) - \Phi(\sigma)\|_1 &= \|\mathbf{E}_b \Phi_b(\rho) - \Phi_b(\sigma)\|_1\\
     &\leq \mathbf{E}_b\|\Phi_b(\rho) - \Phi_b(\sigma)\|_1\\
     &\leq \Pr(\Phi_b(\rho) = \Phi_b(\sigma))  \cdot 0 + \Pr(\Phi_b(\rho) \neq \Phi_b(\sigma)) \cdot 2\label{eq:bigstep}\\
     &= 2\Pr(\Phi_b(\rho) \neq \Phi_b(\sigma))\\
     &\leq 2\Pr(\exists s \in \hat{\mathsf{P}}_n: \Tr(\rho \Phi_b^\dagger (s)) \neq \Tr(\sigma \Phi_b^\dagger (s)))\\
     &\leq 2\Pr(\exists s \in \hat{\mathsf{P}}_n/\eye: \Phi^\dagger_b(s) \in \hat{\mathsf{P}}_n/\eye)
\end{align}
In the last step, notice that $\Tr(\rho \eye) = \Tr(\sigma \eye)= 1$ and so the only way for $\Tr(\rho \Phi_b^\dagger (s)) \neq \Tr(\sigma \Phi_b^\dagger (s))$ is if $\Phi_b^\dagger (s)$ maps to a non-identity Pauli operator. A crucial step of this proof is \Cref{eq:bigstep}, where we have used the fact that for any choice of $b$, the trace distance $\|\Phi_b(\rho) - \Phi_b(\sigma)\|_1$ is always bounded by two. This is true because $\Phi_b$ is a quantum channel and the trace distance between two density matrices is at most two.
\end{proof}
 It now remains to upper bound $\Pr(\exists s \in \hat{\mathsf{P}}_n/\eye: \Phi^\dagger_b(s) \in \hat{\mathsf{P}}_n/\eye)$. 

\subsection{Grouping Pauli Operators based on Weight}
Our argument will rely on a careful accounting of the non-identity Pauli operators that remain at various points in the adjoint of the circuit. The probability that a Pauli operator is mapped to zero is closely related to its weight since the depolarizing channel maps non-identity Pauli operators to zero with constant probability. For a circuit $\Phi$, it is useful to divide a given set of Pauli operators $G$ into subsets based on their minimum weight throughout the circuit. These subsets are formally defined as follows
\begin{definition}
    Let $S_w(\Phi, G) \subseteq \hat{\mathsf{P}}_n$ be the set:
    \begin{align}
        S_w(\Phi, G):= \{s \in G : \min_i | \Phi_{i \leftarrow}(s)| = w\}.
    \end{align}
     where $\Phi_{i \leftarrow}$ denotes the first $i$ layers of the circuit. Here, each layer constitutes a round of disjoint 2-qubit gates, a round of depolarizing channels (or sampled depolarizing errors) and a round of arbitrarily placed qubit resets. When $G$ is not defined we take it to be $\hat{\mathsf{P}}_n$. We define $S_{\geq w}(\Phi, G)$ and $S_{<  w}(\Phi, G)$ in a similar manner.
\end{definition}
A key fact we will use is the following,
\begin{fact}
    For all $s \in S_w(\Phi, G)$ where $\Phi$ is of depth $d$, $\Pr(\Phi_b(s) \neq 0) \leq (1-\gamma)^{wd}$
\end{fact}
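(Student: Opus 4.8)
The plan is to follow a single Pauli operator $s$ through the circuit and to notice that, among the channels composing $\Phi_b$, only the sampled errors $\mc D$ carry any randomness, and that these act independently across the $d$ layers; so I would reduce the statement to a product of independent Bernoulli survival events.

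First I would record the \emph{forced noiseless trajectory} of $s$. Every channel composing $\Phi$ — Clifford gates, resets, and the single-qubit depolarizing channels, the last acting on a non-identity Pauli operator merely as multiplication by $1-\gamma$ while fixing $\eye$ — sends a scalar multiple of a Pauli operator to a scalar multiple of a single Pauli operator or to $0$; this is the many-to-one property (it holds for the adjoint circuit that this Fact is applied to). Hence $\Phi$ determines a well-defined sequence $s = P_0, P_1, \dots, P_d$ with $P_i \propto \Phi_{i\leftarrow}(s)$, each $P_i$ either proportional to a Pauli operator or equal to $0$. If some $P_i = 0$ then $s \in S_0(\Phi, G)$, so $w = 0$ and the claimed bound is $(1-\gamma)^0 = 1$, which is vacuous; so I may assume every $P_i \neq 0$, whence $|P_i| \ge w$ for all $i$ by the definition of $S_w(\Phi, G)$, and (see the last paragraph) the Pauli operator $P'_i$ that the noiseless evolution feeds into the $i$-th depolarizing round also has $|P'_i| \ge w$.

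Next I would reinstate the sampled errors. A single-qubit error $\mc D$ annihilates any non-identity Pauli component on the qubit it hits, fixes $\eye$, and once an operator equals $0$ it remains $0$ under every later channel; moreover the errors in distinct layers are governed by disjoint, independent blocks of the bits of $b$. An easy induction then gives that $\Phi_b(s) \neq 0$ if and only if, for every layer $i \in [d]$, none of the errors sampled in round $i$ lands on the (weight-$\ge w$) support of $P'_i$. Event $i$ depends only on round $i$'s bits and has probability $(1-\gamma)^{|P'_i|} \le (1-\gamma)^{w}$, so by independence
\begin{align}
    \Pr\!\big(\Phi_b(s) \neq 0\big) \;=\; \prod_{i=1}^{d} (1-\gamma)^{|P'_i|} \;\le\; (1-\gamma)^{wd}.
\end{align}

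I do not expect a genuine obstacle here — the core is just a product of Bernoulli survival probabilities over independent rounds. The one point needing care is the claim that each depolarizing round is handed a Pauli operator of weight at least $w$, i.e.\ that the minimum in the definition of $S_w$ really controls the weight as seen by each noise round. This is where the precise bracketing of a ``layer'' and the invertibility of Clifford gates (a unitary never maps a nonzero Pauli operator to $0$) come in; it is routine, but worth spelling out, since a careless bracketing yields only weight $\ge w/2$ per round — still enough for \Cref{theorem:main} up to a constant factor in $d^{*}$.
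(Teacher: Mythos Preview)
Your proposal is correct and follows essentially the same approach as the paper: each depolarizing round sees a Pauli of weight at least $w$, survives independently with probability at most $(1-\gamma)^w$, and the product over $d$ independent rounds gives $(1-\gamma)^{wd}$. The paper's justification is a two-sentence remark to this effect; you supply more detail and rightly flag the bracketing subtlety (that the weight \emph{at the moment of} each noise round is still $\ge w$), which the paper leaves implicit.
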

This is simply due to the fact that each layer of depolarizing channels will encounter at least $w$ non-identity qubits since this is the minimum weight throughout each layer of the circuit. The Pauli operator survives each depolarizing channel acting on a non-identity with probability $1-\gamma$. Because there are $d$ layers, this must occur at least $wd$ times, and so the probability of surviving all depolarizing errors is $(1-\gamma)^{wd}$.

\subsection{Batched Counting Argument}
Our strategy will be to consider $d' = O(\gamma^{-1}\log n)$ layers at a time and show that with high probability the number of remaining Pauli operators decreases substantially after each batch of layers.

\begin{lemma} \label{lemma:counting}
    For $\Phi$ as in \Cref{theorem:main} containing $d > d^*$ layers where \mbox{$d^* =  O(\gamma^{-1}\log(n) \log (n/\epsilon) )$},
    \begin{align}
    \Pr(\exists s \in \hat{\mathsf{P}}_n/\eye: \Phi^\dagger_b(s) \in \hat{\mathsf{P}}_n/\eye) \leq \epsilon
\end{align}
\end{lemma}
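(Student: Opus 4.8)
The plan is to prove Lemma~\ref{lemma:counting} by an inductive batched argument that interleaves two estimates: a weight-thresholding step and a counting step. Fix a batch size $d' = O(\gamma^{-1}\log n)$, and partition the $d$ layers of $\Phi^\dagger$ into $T = d/d'$ consecutive batches $B_1, \dots, B_T$. I will track the random set of non-identity Pauli operators that ``survive'' (i.e.\ have not yet been sent to $0$ or to $\eye$) at the end of each batch. Let $N_j$ denote the number of survivors after batch $B_j$; since each adjoint layer is many-to-one on Paulis (the key property stated at the start of Section~\ref{sec:analysis}), we always have $N_{j+1} \le N_j$, and $N_0 \le 4^n$ trivially. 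The goal is to show that with probability $\ge 1 - \epsilon$, $N_T = 0$, i.e.\ no non-identity Pauli survives the whole adjoint circuit.

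The core of the induction is the following claim for a single batch: if at the start of batch $B_j$ there are at most $M$ surviving Paulis, then after the batch, with high probability, every surviving Pauli has had minimum weight at most $w_0$ somewhere within the batch, where $w_0 = O(\gamma^{-1} d'^{-1}\log M) = O(\log M / \log n)$ when $d'$ is chosen appropriately. This follows from the Fact: any Pauli whose minimum weight across the $d'$ layers of the batch is $\ge w_0$ survives the batch's depolarizing errors with probability at most $(1-\gamma)^{w_0 d'}$, and a union bound over the $\le M$ candidates kills all of them simultaneously once $(1-\gamma)^{w_0 d'} \le \epsilon/(4M T)$ or so — which is exactly the calculation that forces $d' = O(\gamma^{-1}\log n)$ and $w_0 = O(\log M/\log n)$. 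Now comes the counting step: a Pauli that dips to weight $\le w_0$ at some layer $i$ within the batch is determined (as a survivor at the end of the batch) by its image at layer $i$, which is one of at most $\sum_{k \le w_0}\binom{n}{k}3^k \le (3n)^{w_0+1}$ low-weight Paulis; since the adjoint layers after layer $i$ are many-to-one, the number of survivors at the batch's end is at most $(3n)^{w_0+1} = 2^{O(w_0 \log n)} = 2^{O(\log M)} = \poly(M)$. So each batch takes the survivor count from $M$ to roughly $M^c$ for some constant $c<1$ — more precisely, feeding $M = 2^{m}$ in gives $m \mapsto O(\log M) = O(m)$ but with the constant arranged (by taking $d'$ a large enough constant times $\gamma^{-1}\log n$) to be a genuine contraction, say $m \mapsto m/2$. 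Starting from $m_0 = 2n$ (i.e.\ $N_0 \le 4^n$), after $O(\log n)$ batches we reach $m < 1$, i.e.\ $N_j < 2$, meaning zero non-identity survivors. Multiplying $O(\log n)$ batches of $O(\gamma^{-1}\log n)$ layers each gives $d^* = O(\gamma^{-1}\log^2 n)$, and carrying the $\epsilon$ through the union bounds (there are $O(\log n)$ batches, each needing failure probability $\le \epsilon/O(\log n)$, contributing an extra $\log(1/\epsilon)$ and $\log\log n$ factor) yields the stated $d^* = O(\gamma^{-1}\log(n)\log(n/\epsilon))$.

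A few details need care. First, the weight-threshold step and the counting step refer to the \emph{same} notion of ``minimum weight within a batch,'' so I should phrase the induction hypothesis in terms of the set $S_{<w_0}(B_j, G_{j-1})$ of Paulis whose per-batch minimum weight is below $w_0$, where $G_{j-1}$ is the survivor set entering the batch; the Fact (applied to the sub-circuit $B_j$) bounds the probability that anything in $S_{\ge w_0}(B_j, G_{j-1})$ survives, and the counting bound controls $|S_{<w_0}(B_j, G_{j-1})|$ regardless of $b$. Second, I must double-check that ``survives the batch'' for a Pauli that dips to weight $\le w_0$ at an internal layer is indeed controlled by the low-weight image: this is where the many-to-one property of the \emph{remaining} adjoint layers in the batch (and of all subsequent batches) is used — distinct end-of-batch survivors must have distinct low-weight ancestors at layer $i$, so the count is at most the number of low-weight Paulis. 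Third, the base case $N_0 \le 4^n$ should really be $N_0 \le 4^n - 1$ (excluding $\eye$), but this is immaterial. The step I expect to be the main obstacle is tuning the constants in the recursion $m \mapsto O(\log M / \log n)\cdot\log n + O(\log(1/\delta))$ so that it is a strict geometric contraction after exactly $O(\log n)$ rounds while keeping the accumulated failure probability below $\epsilon$: one has to choose $w_0$ (hence $d'$) large enough that $2^{O(w_0\log n)} \le M^{1/2}$, yet small enough that $(1-\gamma)^{w_0 d'}$ is tiny — these are compatible precisely because $d' = \Theta(\gamma^{-1}\log n)$ makes $(1-\gamma)^{w_0 d'} \approx n^{-\Theta(w_0)}$ decay polynomially in $n$ per unit of $w_0$, and the union bound over $\le M \le 4^n$ Paulis only costs $2n$ in the exponent, comfortably absorbed by choosing the hidden constant in $d'$ large. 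Getting this bookkeeping exactly right, and confirming that the $\log(n/\epsilon)$ (rather than $\log(1/\epsilon)$ alone) emerges naturally from the $O(\log n)$-fold union bound, is the crux.
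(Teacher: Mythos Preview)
Your proposal is correct and follows essentially the same route as the paper: batch the adjoint circuit into chunks of $d' = O(\gamma^{-1}\log n)$ layers, alternate a union-bound-plus-Fact step (killing everything whose minimum in-batch weight exceeds a threshold) with a many-to-one counting step, and iterate the resulting geometric contraction for $O(\log n)$ rounds. The paper parameterizes the induction by a fixed weight sequence $w_j=\lceil w_{j-1}/4\rceil$ rather than by the survivor count $M$, and its counting bound $|G_j|\le d'(9n)^{w_j}$ carries an extra factor of $d'$ that your sketch omits (distinct survivors may dip below $w_0$ at \emph{different} layers $i$, so the injection is into pairs $(i,\text{low-weight Pauli})$, not low-weight Paulis alone)---but this is a cosmetic fix that does not affect the asymptotics.
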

Note that \Cref{theorem:main} is implied by the combination of \Cref{lemma:survival_probability} and \Cref{lemma:counting}.
\begin{proof}
    Our proof proceeds by first breaking up the adjoint of the circuit into batches of $d'$ layers. Let $\Phi_{b,j}^\dagger$ be the $j$th batch of $d'$ layers of the adjoint of the circuit.
    For each $j$, we will associate a weight limit $w_j$, where $w_{j} = \ceil{w_{j-1}/4}$ and $w_0=n$. We will then define the sequences of sets $F_0, \ldots F_{j-1}, F_j$ and $G_0, \ldots G_{j-1}, G_j$ inductively as follows,
    \begin{align}
        F_0 &= G_0 = \{\hat{\mathsf{P}}_n\} \\
       F_j &= \{ \Phi_{b,j}^\dagger(s) \text{ } \forall s \in  F_{j-1} \text{ s.t. }\Phi_{b,j}^\dagger(s) \neq 0\}\\
       G_j &= \{ \Phi_{b,j}^\dagger(s) \text{ }\forall s \in S_{< w_j}(\Phi^\dagger_{b,j}, G_{j-1}) \text{ s.t. }\Phi_{b,j}^\dagger(s) \neq 0\}\}
    \end{align}
    In other words, $F_j$ and $G_{j}$ are defined as the sets of all \text{output} Pauli operators of $\Phi_{b,j}^\dagger$ starting from the \text{input} Pauli operators in $F_{j-1}$ and $G_{j-1}$ respectively, where $G_{j}$ has the added restriction that the minimum weight of the Pauli operator as it passes through $\Phi_{b,j}^\dagger$ is less than $w_j$. With this notation, our new goal is to prove that $\Pr(F_j \neq \{\eye\}) \leq \epsilon $ for some large enough $j$.
    
    Our inductive hypothesis is that $\Pr(F_{j-1} \neq  G_{j-1}) \leq (j-1) \epsilon'$ and that  $|G_{j-1}| \leq d' (9n)^{w_{j-1}} $. This is clearly satisfied in the base case of $j-1 = 0$.
    Our inductive step will be to show that $\Pr(F_j \neq G_j) \leq j' \epsilon $, and that  $|G_{j}| \leq d' (9n)^{w_{j}} $. We will iterate this argument until $w_{j} \leq 1$, which occurs at $j = O(\log n)$, at which point $G_j = \{\eye\}$. At this point,  by induction, $\Pr(F_j \neq \{\eye\}) \leq O(\epsilon'\log n) $, and so choosing $\epsilon' = O(\epsilon /\log n)$ completes the proof.

    First, we upper bound $\Pr(F_j \neq G_j)$,
        \begin{align}
        \Pr(F_j \neq G_j) &= \Pr(F_j \neq G_j | F_{j-1} = G_{j-1})\Pr(F_{j-1} = G_{j-1})  \\
        &+ \Pr(F_j \neq G_j | F_{j-1} \neq G_{j-1})\Pr(F_{j-1} \neq G_{j-1}) \\
        &\leq \Pr(F_j \neq G_j | F_{j-1} = G_{j-1})+ (j-1)\epsilon'\tag{inductive hypothesis}\\
        &\leq \Pr(\exists s \in S_{\geq w_{j}}(\Phi^\dagger_{b,j}, G_{j-1}) : \Phi^\dagger_{b,j}(s) \neq 0)+ (j-1)\epsilon' \\
        &\leq  \sum_{s \in  S_{ \geq w_{j}}(\Phi^\dagger_{b,j}, G_{j-1})} \Pr(\Phi^\dagger_{b,j}(s) \neq 0) + (j-1)\epsilon' \tag{union bound}\\
        &\leq |G_{j-1}| (1-\gamma)^{w_{j}d'} + (j-1)\epsilon'\\
        &\leq d' (9n)^{w_{j-1}}(1-\gamma)^{w_{j-1}d'/4}+ (j-1)\epsilon' \tag{inductive hypothesis}\\
        &\leq \epsilon'+ (j-1)\epsilon' \tag{setting $d' = O(\gamma^{-1} \log (n/\epsilon'))$}\\
        &= j \epsilon'
    \end{align}
    
    Next, we upper bound $|G_j|$. Consider every possible \textit{intermediate} Pauli operator at a given layer of $\Phi^\dagger_{b,j}$, whose weight is less than $w_j$. We can exploit the property that each such Pauli operator, whichever layer it is in, gets mapped by $\Phi^\dagger_{b,j}$ to at most one output Pauli operator, which may or may not be in $G_j$. Therefore, we can upper bound $|G_j|$ using the size of this set. There are $d'$ layers, $w_j$ possible values of the minimum weight, at most ${n \choose w_j}$ ways to choose the locations of it's non-identities \footnote{assuming $w_j \leq n/2$, which is valid since only $w_0 >n/2$ and this is the base case}, and at most $3^{w_j}$ ways to assign non-identities to $X$, $Y$, and $Z$. So we have,
    \begin{align}
        |G_{j}|
        &\leq w_{j} d' {n \choose w_{j}} 3^{w_{j}} \\
        &\leq w_{j}d' (\frac{en}{w_{j}})^{w_{j}}3^{w_{j}} \\
        &\leq d'(9n)^{w_{j}} \label{eq:counting}
    \end{align}
\end{proof}

\section*{Acknowledgements}
This material is based upon work supported by the U.S. Department of Energy, Office of Science, Accelerated Research in Quantum Computing, Fundamental Algorithmic Research toward Quantum Utility (FAR-Qu). Additional support is acknowledged from IBM, where JR perfomed a portion of this research as an intern under the guidance of Abhinav Deshpande, Kunal sharma, and Oles Shtanko.
We thank Dominik Hangleiter, Zhi-Yuan Wei, Daniel Malz and Alexey Gorshkov for helpful discussions.
This material is based upon work supported in part by the NSF QLCI award OMA2120757. This work was performed in
part at the Kavli Institute for Theoretical Physics (KITP), which is supported by grant NSF PHY-2309135. JN is supported by the National Science Foundation Graduate Research Fellowship Program under Grant No. DGE 2236417.

\newpage
\printbibliography
\end{document}